\documentclass[letterpaper, abstracton, DIV11, 12pt, titlepage=false]{scrartcl}

\usepackage[ruled]{algorithm2e}

\SetAlFnt{\small}
\SetAlCapFnt{\small}
\SetAlCapNameFnt{\small}
\SetAlCapHSkip{0pt}
\IncMargin{-\parindent}
\let\chapter\undefined

\usepackage[usenames,dvipsnames]{xcolor}
\usepackage{footnote}
\usepackage{enumerate}
\usepackage[round]{natbib}
\usepackage[english]{babel}
\usepackage{amsfonts}
\usepackage{amsmath}
\usepackage{amssymb}
\usepackage{booktabs}
\usepackage{cancel}
\usepackage{amsthm}
\usepackage{multirow}
\usepackage{mathabx}
\usepackage{dsfont}
\usepackage{graphicx}
\usepackage{tikz}
\usepackage{pifont}
\usepackage{bm}
\usepackage{wrapfig}
\usepackage{microtype}
\usepackage{setspace}
\usepackage[colorlinks=true,bookmarks=true, pdftex, citecolor = blue, linkcolor = blue,urlcolor = blue]{hyperref}

\onehalfspace

\DisableLigatures[f]{encoding = *, family = * }

\def\bf{\normalfont\bfseries}

\changenotsign

\definecolor{darkgreen}{RGB}{40,150,70}

\theoremstyle{plain}
\newtheorem{theorem}{Theorem}
\newtheorem{lemma}{Lemma}

\newtheorem{corollary}{Corollary}

\theoremstyle{definition}
\newtheorem{definition}{Definition}

\newtheorem{axiom}{Axiom}
\newcommand{\ourrep}{}

\theoremstyle{remark}
\newtheorem{remark}{Remark}


\begin{document}

{\setstretch{1}
\title{%
\LARGE{%
An Axiomatic Decomposition of Strategyproofness 
for Ordinal Mechanism \\
with Indifferences}\thanks{%
\scriptsize{%
Department of Informatics, University of Zurich, Switzerland,
email: \{mennle, seuken\}@ifi.uzh.ch.
For updates see www.ifi.uzh.ch/ce/publications/PSP.pdf. 
We would like to thank Baharak Rastegari for insightful discussions.
Part of this research was supported by the Hasler Foundation under grant \#12078 and the SNSF (Swiss National Science Foundation) under grant \#156836.}}}
\author{%
Timo Mennle \\ University of Zurich
\and Sven Seuken \\ University of Zurich }
\date{First version: July 14, 2014 \\
This version: \today}
\maketitle


\begin{abstract}
We study mechanism which operate on ordinal preference information (i.e., rank ordered lists of alternatives) on the full domain of weak preferences that admits indifferences. 
We present a novel decomposition of strategyproofness into three axioms: 
\emph{separation monotonic}, 
\emph{separation upper invariant}, 
and \emph{separation lower invariant}. 
Each axiom is a natural restriction on how mechanisms can react when agents change their opinion about the relative ranking of any two adjacently ranked groups of alternatives. 
Our result extends a result from \citep{MennleSeuken2017PSP_WP}, a decomposition of strategyproofness for strict preferences, to the full domain that includes weak preferences. 
\end{abstract}
\noindent \textbf{Keywords:}
Strategyproofness, 
Ordinal Mechanisms,
Indifferences, 
Decomposition

\medskip
\noindent\textbf{JEL:} 
%
C79, 
D82 
%
%
%
%
%
%
}

\section{Introduction}
Ordinal mechanisms are commonly used in markets where monetary transfers are prohibited or restricted. 
Prominent examples include the assignment of seats at public schools, which is usually based on rank ordered preference lists from parents, and voting schemes, e.g., when the International Olympic Committee must agree on where to hold the next Olympic Games. 
A mechanism is said to be \emph{strategyproof} if it makes truthful reporting of their preference orders a dominant strategy for all agents. 
This is an important requirement for multiple reasons: 
first, strategyproof mechanisms are more likely to elicit truthful preferences from agents who act in their own best interest. 
This information can then be used to determine an appealing outcome (subject to the limitations imposed by strategyproofness); 
but it may also be useful beyond the role as input to the mechanism, e.g., to learn the true demand for particular schools in school choice settings. 
Second, it makes participation in the mechanism simple for the agents because they do not have to reason about the preferences or equilibrium strategies of the other agents. 
Therefore, strategyproofness is also a fairness requirement as it levels the playing field between agents with varying cognitive and computational capabilities. 

The standard definition of strategyproofness is composed of a series of incentive constraints: 
it requires that 
\emph{for any profile of true preferences, any agent, and any conceivable misreport by this agent, the agent should weakly prefer the outcome obtained from reporting their preferences truthfully to the outcome from submitting the misreport.}

First, observe that the number of constraints my be very large. 
If agents have strict preferences over $m$ alternatives, they may have any of $m!$ possible strict preference orders and submit any of possible $m!-1$ misreports. 
This means that the basic definition of strategyproofness involves an exponential number of  individual constraints for any agent in any conceivable situation. 
The number is even larger when they may also be indifferent between alternatives. 
This large number of constraints is problematic because it makes the strategyproofness concept unwieldy. 
Therefore, proving general statements about strategyproof mechanisms often requires non-trivial arguments; 
and the large number makes encoding strategyproofness as constraints to an optimization problem infeasible under the automated mechanism design paradigm.  

Second, observe that the restrictions imposed by strategyproofness are implicit and they yield few insights about the structure of strategyproof mechanisms. 
Thus, it is unclear how exactly these mechanisms look like, and it may be challenging to verify or disprove strategyproofness of a give mechanism or a given class of mechanisms. 

In this note, we address these challenges: 
we study the strategyproofness requirement for ordinal mechanisms on the full domain of weak preferences. 
Our main contribution is a decomposition of strategyproofness into three simple axioms.%
\footnote{This result is a direct generalization of the decomposition of strategyproofness for probabilistic assignment mechanisms on the strict preference domain into swap monotonicity, upper invariance, and lower invariance; 
see Theorem 1 in \citep{MennleSeuken2017PSP_WP}.} %
This decomposition mitigates the two concerns about strategyproofness described above. 
First, it provides a substantially smaller set of conditions that is equivalent to strategyproofness. 
Second, the axioms describe the ways in which strategyproof mechanisms may react to certain basic kinds of misreports and thereby delivers a much clearer picture of what strategyproof mechanisms look like. 
\section{Model}
Let $M$ be a set of $m$ \emph{alternatives}. 
An agent's \emph{preference order} is a weak order relation $R$ on the set of alternatives: 
$a ~ R ~ b$ means that the agent \emph{prefers} alternative $a$ to alternative $b$. 
The agent is \emph{indifferent} between $a$ and $b$ if $a ~R~ b$ and $b ~R~ a$ (denoted $a ~I~ b$), 
and the agent \emph{strictly prefers} $a$ to $b$ if $a ~R~ b$ but not $b ~R~ a$ (denoted $a~P~b$). 
We represent the preference order $R$ as 
\begin{equation}
	M_1 ~P~ \ldots ~P~ M_k ~P~ \ldots ~P~ M_K, 
\end{equation}
where $(M_k)_{1\leq k\leq K}$ is a partition of $M$ such that
\begin{itemize}
\setlength{\itemsep}{0pt}
	\item $a ~I~ b$ for all $a,b \in M_k$ and all $k\in \{1,\ldots,K\}$, 
	\item $a ~P~ b$ for any $a \in M_k, b\in M_{k+1}$ for any $k\in \{1,\ldots,K-1\}$. 
\end{itemize}
For the sake of notational simplicity, we formulate our results for situations with a single agent; 
however, they all extend straightforwardly to settings with multiple agents. 
Let $\mathcal{R}$ be the set of all possible preference orders, then a \emph{mechanism} is a mapping from a preference order to a lottery over alternatives; 
formally, $\varphi : \mathcal{R} \rightarrow \Delta(M)$. 

For any lottery $x \in \Delta(M)$ and any subset $A\subseteq M$ of the alternatives, let $x_A = \sum_{a \in A} x_a$ denote the probability of selecting an alternative within $A$. 
Given a preference order $R$ and two lotteries $x,y \in \Delta(M)$, we say that \emph{$x$ first order-stochastically dominates $y$ at $R$} if, for all alternatives $a \in M$, we have
\begin{equation}
	x_{\{j \in M: j~R~a\}} = \sum_{j \in M : j~R~a} x_j \geq \sum_{j \in M : j~R~a} y_j = y_{\{j \in M: j~R~a\}}. 
\end{equation}
A mechanism $\varphi$ is \emph{strategyproof} if, 
for all pairs of preference orders $(R,R') \in \mathcal{R}^2$, the lottery $\varphi(R)$ first order-stochastically dominates the lottery $\varphi(R')$ at $R$. 
\section{The Axioms}
We now define the axioms which make up our decomposition of strategyproofness. 
Each axiom restricts the way in which the mechanisms can react to particular changes in the preference report of the agent. 
A \emph{separation} is a pair of preference orders $(R,R') \in \mathcal{R}^2$ such that there exists some $\kappa \in \{1,\ldots,K\}$ with 
\begin{equation*}
\begin{array}{ccccccccccccc}
	M_1 & ~P~  & \ldots & ~P~  & M_{\kappa-1} & ~P~  & M_\kappa              & ~P~  & M_{\kappa+1} & ~P~  & \ldots & ~P~  & M_K, \\
	M_1 & ~P'~ & \ldots & ~P'~ & M_{\kappa-1} & ~P'~ & M^1_\kappa ~P'~ M^2_\kappa & ~P'~ & M_{\kappa+1} & ~P'~ & \ldots & ~P'~ & M_K, 
\end{array}
\end{equation*}
where $M^1_\kappa \dot{\cup} M^2_\kappa = M_\kappa$ is a disjoint partition of $M_\kappa$. 
Thus, the two preference orders of a separation $(R,R')$ differ only on the relative ranking of the alternatives in $M_\kappa$: 
an agent with preference order $R$ is completely indifferent between the alternatives in $M_\kappa$, whereas an agent with preference order $R'$ strictly prefers any alternative in $M^1_\kappa$ to any alternative in $M^2_\kappa$, but is indifferent between any two alternatives within $M^1_\kappa$ or within $M^2_\kappa$, respectively. 

Observe that a separation primarily reveals information about the agent's relative ranking of the alternatives in $M_\kappa$.  
The axioms we defined impose restrictions on how the outcomes of a mechanism can vary based on this information. 
\begin{axiom}
A mechanism $\varphi$ is \emph{separation responsive} if, 
for all separations $(R,R')$, we have
$\varphi_{M_\kappa^1}(P') \geq \varphi_{M_\kappa^1}(P)$ 
and 
$\varphi_{M_\kappa^2}(P') \leq \varphi_{M_\kappa^2}(P)$. 
\end{axiom}
Separation responsiveness captures the intuition that a mechanism should not assign \emph{less} probability to alternatives that the agent claims to \emph{prefer}. 
Under a separation responsive mechanism, these probabilities must change in the right direction if they change at all. 
\begin{axiom}
A mechanism $\varphi$ is \emph{separation direct} if, 
for all separations $(R,R')$ 
with $\varphi_{M_k}(R) \neq \varphi_{M_k}(R')$ for some $k \in \{1,\ldots,K\}$, 
we have  
$\varphi_{M_\kappa^1}(P') \neq \varphi_{M_\kappa^1}(P)$ 
and 
$\varphi_{M_\kappa^2}(P') \neq \varphi_{M_\kappa^2}(P)$. 
\end{axiom}
Intuitively, if a separation direct mechanism changes the outcome at all under some separation, then this change must affect at least the probability for the sets of alternatives for which differential preferences are reported, namely the alternatives in $M_\kappa^1$ relative to the alternatives in $M_\kappa^2$. 
\begin{axiom}
A mechanism $\varphi$ is \emph{separation monotonic} if it is separation responsive and separation direct. 
\end{axiom}
The intuition for separation monotonicity arises from the interpretations of separation responsiveness and directness: 
if a separation leads to any change in the outcome of the mechanism, then this change must affect at least the set of alternatives that was separated, and the change has to point in the right direction.\footnote{Separation monotonicity can be understood as swap monotonicity from \citep{MennleSeuken2017PSP_WP} but adjusted for the domain weak preferences; 
instead of the swaps of two adjacently ranked alternatives, the notion of \emph{locality} in this domain is described by separations.}%
\begin{axiom}
A mechanism $\varphi$ is \emph{separation upper invariant} if, 
for all separations $(R,R')$, we have 
$\varphi_{M_k}(P) = \varphi_{M_k}(P')$ for all $k \in \{1,\ldots,\kappa-1\}$.
\end{axiom}
Separation upper invariance yields robustness to a certain kind of strategic misreport. 
Suppose that the agent is interested primarily in the probability of higher ranking alternatives. 
If a mechanism is not separation upper invariant, then the agent may improve the changes of a higher ranking alternative by performing one or multiple separations of lower ranking sets of alternatives. 
In the domain of strict preferences, separation upper invariance corresponds to upper invariance, which is equivalent to truncation robustness \citep{Hashimoto2013TwoAxiomaticApproachestoPSMechanisms}. 
\begin{axiom}
A mechanism $\varphi$ is \emph{separation lower invariant} if, 
for all separations $(R,R')$, 
we have 
$\varphi_{M_k}(P) = \varphi_{M_k}(P')$ for all $k \in \{\kappa+1,\ldots,K\}$.
\end{axiom}
Separation lower invariance is the natural complement of separation upper invariance on the lower contour set. 
It requires that changes in the preference order over higher ranking alternatives do not influence the probabilities for sets of lower ranking alternatives. 
\section{Decomposition Results}
In this section, we present our main result, the decomposition of strategyproofness. 
\begin{theorem}
\label{THM:DECOMPOSITION}
A mechanism $\varphi$ is strategyproof if and only if it is 
separation monotonic, 
separation upper invariant, 
and separation lower invariant. 
\end{theorem}
\begin{remark}
In the domain with indifferences, the number of possible preference orders is $a(m)$, the $m^\text{th}$ Fubini number, which grows super-exponentially in $m$. 
The decomposition result allows us to reduce the number of constraints that we need to verify for strategyproofness: 
we only need to consider separations, instead of arbitrary misreports, and this number is bounded by $2^m$ in the worst case (when the agent is in fact indifferent between all alternatives), but usually much lower. 
\end{remark}
\begin{remark} 
\label{REM:RELAX_MON_TO_RESP}
The requirement of separation monotonicity in Theorem \ref{THM:DECOMPOSITION} can be relaxed to separation responsiveness because separation upper and lower invariance imply separation directness (see Lemma \ref{LEM:SP_IMPLIES_SD}). 
\end{remark}
Theorem \ref{THM:DECOMPOSITION} implies an equivalent condition for strateyproofness on the smaller class of deterministic mechanisms. 
\begin{corollary} 
\label{COR:DET_EQUIV}
A deterministic mechanism $\varphi$ is strategyproof if and only if it is separation monotonic. 
\end{corollary}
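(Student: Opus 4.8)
The plan is to derive the corollary directly from the decomposition in Theorem~\ref{THM:DECOMPOSITION} rather than re-proving strategyproofness from scratch. The \emph{only if} direction is immediate: separation monotonicity is one of the three axioms that Theorem~\ref{THM:DECOMPOSITION} shows to be equivalent to strategyproofness, so any strategyproof mechanism---deterministic or not---is in particular separation monotonic. For the \emph{if} direction it therefore suffices to show that, \emph{for a deterministic mechanism}, separation monotonicity already implies separation upper invariance and separation lower invariance; the conclusion then follows by applying Theorem~\ref{THM:DECOMPOSITION} a second time.

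The key structural fact I would exploit is that determinism forces every quantity $\varphi_{M_k}(R)$ to lie in $\{0,1\}$: since $\varphi(R)$ is a point mass and the blocks $(M_k)$ partition $M$, exactly one block receives all the probability. I would fix an arbitrary separation $(R,R')$ and split into two cases. If $\varphi_{M_k}(R)=\varphi_{M_k}(R')$ for all $k$, then both invariances hold trivially. Otherwise some block changes, and separation directness guarantees that both $\varphi_{M_\kappa^1}$ and $\varphi_{M_\kappa^2}$ differ between $R$ and $R'$. Because these are $\{0,1\}$-valued and separation responsiveness fixes the direction of each change ($\varphi_{M_\kappa^1}$ weakly up, $\varphi_{M_\kappa^2}$ weakly down), the only consistent possibility is $\varphi_{M_\kappa^2}(R)=1$ and $\varphi_{M_\kappa^1}(R')=1$.

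From here the argument closes cleanly: the outcome under $R$ lies entirely in $M_\kappa^2\subseteq M_\kappa$ and the outcome under $R'$ lies entirely in $M_\kappa^1\subseteq M_\kappa$, so all probability is concentrated in the block $M_\kappa$ under both orders. Hence $\varphi_{M_k}(R)=\varphi_{M_k}(R')=0$ for every $k\neq\kappa$, which is exactly separation upper invariance (for $k<\kappa$) and separation lower invariance (for $k>\kappa$). I do not expect a genuine obstacle here; the one step requiring care is the discreteness deduction, where one must invoke both halves of monotonicity at once---responsiveness to pin down the sign of each change, and directness to rule out the scenario in which some other block changes while $M_\kappa^1$ and $M_\kappa^2$ stay fixed. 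It is precisely the interaction of these two requirements with the $\{0,1\}$ constraint that excludes the non-invariant configurations a general randomized mechanism could otherwise exhibit.
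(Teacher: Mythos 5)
Your proposal is correct and follows essentially the same route as the paper: both reduce the claim to showing that, for a deterministic mechanism, separation monotonicity forces separation upper and lower invariance, and both use responsiveness together with directness and the $\{0,1\}$-valuedness of the probabilities to conclude that the selected alternative moves from $M_\kappa^2$ to $M_\kappa^1$, leaving all other blocks untouched. Your explicit case split on whether any block probability changes is, if anything, slightly more careful than the paper's phrasing, which conditions on $\varphi(R)\neq\varphi(R')$ even though separation directness is triggered only by a change in some block probability $\varphi_{M_k}$.
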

\begin{proof} 
It suffices to show that for deterministic mechanisms, separation monotonicity implies separation upper and lower invariance. 
Given any separation $(R,R')$ for which $\varphi(R) \neq \varphi(R')$, separation monotonicity requires that $\varphi_{M_{\kappa}^1}(R') > \varphi_{M_{\kappa}^1}(R)$ and $\varphi_{M_{\kappa}^2}(R') < \varphi_{M_{\kappa}^2}(R)$. 
Since $\varphi$ is deterministic, there must exist alternatives $a \in M_{\kappa}^1, b \in M_{\kappa}^2$ such that $\varphi_{\{a\}}(R') = \varphi_{\{b\}}(R) = 1$ and $\varphi_{\{a\}}(R) = \varphi_{\{b\}}(R') = 0$. 
In other words, $\varphi$ selects $b$ for preference order $R$ and $a$ for $R'$. 
Thus, there is no change in the probabilities of selecting any other alternatives. 
\end{proof}
%
%
%

\section*{Proof of Theorem \ref{THM:DECOMPOSITION}}
To prove Theorem \ref{THM:DECOMPOSITION}, we first show that strategyproofness of $\varphi$ implies the axioms (Lemmas \ref{LEM:SP_IMPLIES_SUI_SLI}, \ref{LEM:SP_IMPLIES_SD}, \ref{LEM:SP_IMPLIES_SR}). 
Subsequently, we show the more complicated sufficiency of the axioms for strategyproofness (Lemmas \ref{LEM:AXIOMS_IMPLY_SEP_SP}, \ref{LEM:AXIOMS_IMPLY_L_SEP_SP}, \ref{LEM:AXIOMS_IMPLY_MULTI_SEP_SP}, \ref{LEM:EXISTS_MULTI_SEP_TRANSITION}, \ref{LEM:MULTI_SEP_SP_IMPLIES_SP}). 
\begin{lemma} 
\label{LEM:SP_IMPLIES_SUI_SLI} 
If a mechanism $\varphi$ is strategyproof, then it is separation upper invariant and separation lower invariant. 
\end{lemma}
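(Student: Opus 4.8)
The plan is to apply the definition of strategyproofness twice to the separation pair $(R,R')$ — once in each direction — and to exploit the fact that the cumulative upper contour sets at the class boundaries are shared by both preference orders. Write $x = \varphi(R)$ and $y = \varphi(R')$, and for $k \in \{0,1,\ldots,K\}$ let $U_k = M_1 \cup \cdots \cup M_k$ denote the union of the top $k$ indifference classes, with $U_0 = \emptyset$. Then for each $k$ I have $\varphi_{M_k}(R) = x_{U_k} - x_{U_{k-1}}$ and $\varphi_{M_k}(R') = y_{U_k} - y_{U_{k-1}}$, so it suffices to pin down the cumulative quantities $x_{U_k}$ and $y_{U_k}$.

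First I would observe that first order-stochastic dominance need only be checked once per indifference class: as $a$ ranges over a single class, the upper contour set $\{j \in M : j ~R~ a\}$ is constant. Under $R$ this set equals $U_k$ for every $a \in M_k$, so the dominance constraints reduce to the sets $U_1, \ldots, U_K$. Under $R'$ the classes are $M_1, \ldots, M_{\kappa-1}, M^1_\kappa, M^2_\kappa, M_{\kappa+1}, \ldots, M_K$, and the resulting upper contour sets are exactly the same sets $U_1, \ldots, U_K$ together with the single extra set $U_{\kappa-1} \cup M^1_\kappa$. The crucial point is that every $U_k$ with $k \in \{1,\ldots,K\}$ is an upper contour set under \emph{both} orders; in particular $U_\kappa$ reappears under $R'$ as the upper contour set of any alternative in the lower sub-class $M^2_\kappa$.

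Next I would invoke strategyproofness in both directions. Since $\varphi$ is strategyproof, $x$ first order-stochastically dominates $y$ at $R$, giving $x_{U_k} \geq y_{U_k}$ for all $k$; and $y$ first order-stochastically dominates $x$ at $R'$, giving $y_{U_k} \geq x_{U_k}$ for all $k$ (the extra set $U_{\kappa-1}\cup M^1_\kappa$ yields one further inequality that is not needed here). Combining these two chains at the common sets squeezes them to equalities, $x_{U_k} = y_{U_k}$ for every $k \in \{0,1,\ldots,K\}$. Taking successive differences then gives $\varphi_{M_k}(R) = x_{U_k} - x_{U_{k-1}} = y_{U_k} - y_{U_{k-1}} = \varphi_{M_k}(R')$ for every $k$, which in particular holds for all $k \leq \kappa-1$ (separation upper invariance) and for all $k \geq \kappa+1$ (separation lower invariance).

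The argument is essentially bookkeeping, and the only step that requires care is the identification of the common upper contour sets — specifically, recognizing that $U_\kappa$ survives as an upper contour set under $R'$, so that the two dominance relations can be combined into equality at the boundary just below $M_\kappa$ as well. I would note in passing that this even forces $\varphi_{M_\kappa}(R) = \varphi_{M_\kappa}(R')$, i.e.\ the total probability assigned to the separated class is preserved, which is a conclusion stronger than either invariance axiom demands.
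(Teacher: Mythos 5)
Your proof is correct and rests on the same mechanism as the paper's own argument: applying strategyproofness in both directions to the cumulative upper contour sets $U_k = M_1 \cup \cdots \cup M_k$, which are upper contour sets under both $R$ and $R'$, and combining the resulting inequalities. The paper organizes this as a proof by contradiction using an extremal violating index (smallest for upper invariance, greatest for lower invariance), whereas you argue directly and obtain the slightly stronger conclusion $\varphi_{M_k}(R) = \varphi_{M_k}(R')$ for \emph{all} $k$ in a single pass, but the underlying idea is identical.
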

\begin{proof}[Proof of Lemma \ref{LEM:SP_IMPLIES_SUI_SLI}] 
We first show separation upper invariance. 
Towards contradiction, assume that $\varphi$ is strategyproof but not separation upper invariant. 
Then there exists a separation $(R,R')$, such that 
\begin{equation}
	\varphi_{M_{\tilde{k}}}(R) \neq \varphi_{M_{\tilde{k}}}(R'),
	\label{EQ:FIRST_VIOLATION_OF_SUI}
\end{equation}
where $\tilde{k} < \kappa$ is the smallest index for which inequality (\ref{EQ:FIRST_VIOLATION_OF_SUI}) holds. 
If $\varphi_{M_{\tilde{k}}}(R) < \varphi_{M_{\tilde{k}}}(R')$, then 
\begin{equation}
	\sum_{k=1}^{\tilde{k}} \varphi_{M_k}(R) < \sum_{k=1}^{\tilde{k}} \varphi_{M_k}(R'),
\end{equation}
which implies that $\varphi(R)$ does not first order-stochastically dominate $\varphi(R')$ at $R$, a contradiction to strategyproofness of $\varphi$. 
Conversely, if $\varphi_{M_{\tilde{k}}}(R) > \varphi_{M_{\tilde{k}}}(R')$, then an analogous argument implies that $\varphi(R')$ does not first order-stochastically dominate $\varphi(R)$ at $R'$, again a contradiction. 

The proof of separation lower invariance is analogous, except that we choose $\tilde{k} > \kappa$ to be the \emph{greatest} index for which $\varphi_{M_{\tilde{k}}}(R) \neq \varphi_{M_{\tilde{k}}}(R')$. 
\end{proof}
\begin{lemma} 
\label{LEM:SP_IMPLIES_SD}
If a mechanism $\varphi$ is strategyproof, then it is separation direct. 
\end{lemma}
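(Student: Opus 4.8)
The plan is to prove Lemma \ref{LEM:SP_IMPLIES_SD} by contraposition. Recall that separation directness asserts that whenever a separation $(R,R')$ induces \emph{any} change in the outcome (i.e.\ $\varphi_{M_k}(R) \neq \varphi_{M_k}(R')$ for some $k$), then in particular the probabilities on the separated sets must change, namely $\varphi_{M_\kappa^1}(R') \neq \varphi_{M_\kappa^1}(R)$ and $\varphi_{M_\kappa^2}(R') \neq \varphi_{M_\kappa^2}(R)$. So I would assume that the outcome changes somewhere but that one (hence, I will argue, both) of the separated probabilities stays fixed, and derive a contradiction with strategyproofness.

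First I would invoke Lemma \ref{LEM:SP_IMPLIES_SUI_SLI}, which is already proved: strategyproofness gives both separation upper invariance and separation lower invariance. This immediately pins down $\varphi_{M_k}(R) = \varphi_{M_k}(R')$ for every $k \neq \kappa$, since upper invariance handles all $k < \kappa$ and lower invariance handles all $k > \kappa$. Consequently the \emph{only} index at which the outcome can differ is $k = \kappa$ itself. Combined with the hypothesis that the outcome changes somewhere, this forces $\varphi_{M_\kappa}(R) \neq \varphi_{M_\kappa}(R')$. But by the partition structure $M_\kappa = M_\kappa^1 \dot\cup M_\kappa^2$, we have $\varphi_{M_\kappa} = \varphi_{M_\kappa^1} + \varphi_{M_\kappa^2}$ at either preference order.

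The key observation is then that $\varphi_{M_\kappa^1}$ and $\varphi_{M_\kappa^2}$ cannot move independently. I would apply first-order stochastic dominance within the block $M_\kappa$: under $R'$ the agent strictly prefers $M_\kappa^1$ to $M_\kappa^2$, so comparing the two lotteries via the dominance condition at the threshold separating $M_\kappa^1$ from $M_\kappa^2$ (together with upper invariance fixing everything ranked above $M_\kappa$) yields that $\varphi_{M_\kappa^1}(R') \geq \varphi_{M_\kappa^1}(R)$, and symmetrically the reverse comparison at $R'$ shows the probability mass on $M_\kappa^1$ is constrained from the other side. The clean route is: since $\varphi_{M_\kappa}(R) = \varphi_{M_\kappa}(R')$ is now \emph{forced to fail}, we have $\varphi_{M_\kappa^1}(R) + \varphi_{M_\kappa^2}(R) \neq \varphi_{M_\kappa^1}(R') + \varphi_{M_\kappa^2}(R')$, so the two separated probabilities cannot both be unchanged. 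It remains to rule out that exactly one of them is unchanged while the other differs. This is where strategyproofness enters directly: if, say, $\varphi_{M_\kappa^1}(R) = \varphi_{M_\kappa^1}(R')$ but $\varphi_{M_\kappa^2}(R) \neq \varphi_{M_\kappa^2}(R')$, then since the total mass on $M_\kappa = M_\kappa^1 \dot\cup M_\kappa^2$ equals the total mass on all sets ranked weakly below the top $\kappa-1$ blocks, and upper invariance fixes the mass on the top $\kappa-1$ blocks, the total probability $\varphi_{M_\kappa}(R)$ and $\varphi_{M_\kappa}(R')$ would differ by exactly the change in $\varphi_{M_\kappa^2}$; applying the dominance inequality at the alternative separating $M_\kappa^1$ from $M_\kappa^2$ under $R'$ then contradicts dominance in one of the two directions $R \to R'$ or $R' \to R$.

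The main obstacle I anticipate is handling the \textbf{mixed case} cleanly --- showing that it is impossible for precisely one of $\varphi_{M_\kappa^1}$, $\varphi_{M_\kappa^2}$ to change. The symmetric cases (both unchanged, contradicting $\varphi_{M_\kappa}(R)\neq\varphi_{M_\kappa}(R')$; both changed, which is what we want) are immediate, so the entire content lies in the mixed case. I expect that the right bookkeeping is to write the dominance condition at the alternative $a \in M_\kappa^1$ that marks the top of the $M_\kappa^1$ block under $R'$: the dominance inequality $R \to R'$ gives $\sum_{k<\kappa}\varphi_{M_k}(R) + \varphi_{M_\kappa^1}(R) \geq \sum_{k<\kappa}\varphi_{M_k}(R') + \varphi_{M_\kappa^1}(R')$, which with upper invariance reduces to $\varphi_{M_\kappa^1}(R) \geq \varphi_{M_\kappa^1}(R')$, while the reverse comparison is not symmetric because $R$ is indifferent across $M_\kappa$. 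Pinning down exactly which dominance constraint at $R$ versus $R'$ delivers the contradiction in the mixed case is the delicate step, and I would verify it by carefully tracking the threshold set $\{j : j \mathrel{R'} a\}$ and using that under $R$ this same set carries the combined mass $\varphi_{M_\kappa^1}(R) + \varphi_{M_\kappa^2}(R)$ (since $R$ is indifferent over all of $M_\kappa$), which forces the change to be detectable on $M_\kappa^1$ precisely when it is detectable on $M_\kappa^2$.
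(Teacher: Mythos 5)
There is a genuine gap here, and it comes from missing the one observation that makes this lemma essentially trivial: total probability conservation. Once Lemma \ref{LEM:SP_IMPLIES_SUI_SLI} fixes $\varphi_{M_k}(R) = \varphi_{M_k}(R')$ for every $k \neq \kappa$, the fact that $\sum_{k=1}^{K}\varphi_{M_k}(R) = \sum_{k=1}^{K}\varphi_{M_k}(R') = 1$ immediately forces $\varphi_{M_\kappa}(R) = \varphi_{M_\kappa}(R')$ as well. So the antecedent of separation directness (``$\varphi_{M_k}(R) \neq \varphi_{M_k}(R')$ for some $k$'') can never hold for a strategyproof mechanism, and the axiom is satisfied vacuously --- this is exactly the paper's proof. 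You instead assert that the hypothesis ``forces $\varphi_{M_\kappa}(R) \neq \varphi_{M_\kappa}(R')$'' and then spend the rest of the argument trying to analyze a configuration that cannot occur. The same conservation argument also disposes of your ``mixed case'' in one line: if $\varphi_{M_\kappa^1}$ is unchanged and all blocks $k\neq\kappa$ are unchanged, then $\varphi_{M_\kappa^2} = 1 - \sum_{k\neq\kappa}\varphi_{M_k} - \varphi_{M_\kappa^1}$ is unchanged too. No stochastic-dominance bookkeeping is needed.

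Moreover, the dominance bookkeeping you do sketch contains an error. You claim that the constraint in the direction $R \to R'$ (i.e., $\varphi(R)$ dominating $\varphi(R')$ at $R$) yields $\sum_{k<\kappa}\varphi_{M_k}(R) + \varphi_{M_\kappa^1}(R) \geq \sum_{k<\kappa}\varphi_{M_k}(R') + \varphi_{M_\kappa^1}(R')$. But at $R$ the agent is indifferent over all of $M_\kappa$, so for any $a \in M_\kappa^1$ the upper contour set $\{j : j\,R\,a\}$ is $M_1 \cup \ldots \cup M_\kappa$, not $M_1 \cup \ldots \cup M_{\kappa-1} \cup M_\kappa^1$; the latter is an upper contour set of $R'$ only. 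The correct deduction is from dominance at $R'$ and gives $\varphi_{M_\kappa^1}(R') \geq \varphi_{M_\kappa^1}(R)$ --- the opposite direction of what you wrote, and in fact this is the content of separation responsiveness (Lemma \ref{LEM:SP_IMPLIES_SR}), not of directness. Since you yourself flag the mixed case as unresolved and the inequality you propose to resolve it with is not valid, the proof as written does not go through; replacing the entire case analysis with the summation identity closes it.
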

\begin{proof}[Proof of Lemma \ref{LEM:SP_IMPLIES_SD}] 
By Lemma \ref{LEM:SP_IMPLIES_SUI_SLI}, $\varphi$ is separation upper and lower invariant. 
Thus, for any separation $(R,R')$ and any $k \neq \kappa$, we have $\varphi_{M_k}(R) = \varphi_{M_k}(R')$. 
Since 
\begin{equation}
	\varphi_{M_{\kappa}}(R)-\varphi_{M_{\kappa}}(R') = \sum_{k\neq \kappa} \varphi_{M_{k}}(R')-\varphi_{M_{k}}(R) = 0, 
\end{equation}
the conditions under which separation directness is any restriction, never arise under $\varphi$. 
Thus, this axiom is trivially satisfied. 
\end{proof}
\begin{lemma} 
\label{LEM:SP_IMPLIES_SR}
If a mechanism $\varphi$ is strategyproof, then it is separation responsive. 
\end{lemma}
\begin{proof} 
Towards contradiction, assume that $\varphi$ is strategyproof but not separation responsive. 
Lemma \ref{LEM:SP_IMPLIES_SUI_SLI} implies that, for any separation $(R,R')$, we have
\begin{equation}
	\varphi_{M_{k}}(R) = \varphi_{M_{k}}(R') 
\end{equation}
for all $k \in \{1,\ldots,K\}$ and
\begin{equation}
	\varphi_{M_{\kappa}^1}(R') - \varphi_{M_{\kappa}^1}(R) = \varphi_{M_{\kappa}^2}(R) - \varphi_{M_{\kappa}^2}(R'). 
\end{equation}
In particular, there exists a separation $(R,R')$, such that 
\begin{equation}
	\varphi_{M_{\kappa}^1}(R') - \varphi_{M_{\kappa}^1}(R) = \varphi_{M_{\kappa}^2}(R) - \varphi_{M_{\kappa}^2}(R') < 0. 
\end{equation}
Thus, 
\begin{equation}
	\varphi_{M_{\kappa}^1}(R') + \sum_{k=1}^{\kappa-1} \varphi_{M_{k}}(R') 
	< 
	\varphi_{M_{\kappa}^1}(R) + \sum_{k=1}^{\kappa-1} \varphi_{M_{k}}(R),
\end{equation}
which means that $\varphi(R')$ does not first order-stochastically dominate $\varphi(R)$ at $R'$, a contradiction to strategyproofness. 
\end{proof}
This establishes the sufficiency-part in Theorem \ref{THM:DECOMPOSITION}. 

\medskip
To show necessity, we introduce 
$L$-separations and multi-separations (Definition \ref{DEF:L_AND_MULTI_SEPARATION}), 
as well as separation-, $L$-separation-, and multi-separation strategyproofness (Definition \ref{DEF:SEP_L_SEP_MULTI_SEP_SP}). 
We then show that the axioms imply multi-separation strategyproofness (Lemmas \ref{LEM:AXIOMS_IMPLY_SEP_SP}, \ref{LEM:AXIOMS_IMPLY_L_SEP_SP}, \ref{LEM:AXIOMS_IMPLY_MULTI_SEP_SP}), 
which in turn implies strategyproofness (Lemmas \ref{LEM:EXISTS_MULTI_SEP_TRANSITION}, \ref{LEM:MULTI_SEP_SP_IMPLIES_SP}). 
\begin{definition}[$L$- and Multi-separation] 
\label{DEF:L_AND_MULTI_SEPARATION}
Let $(R,R')$ be a pair of preference orders:
\begin{itemize}
\setlength{\itemsep}{0pt}
	\item $(R,R')$ is called an \emph{$L$-separation} if there exist $L\geq 2$ and $\kappa \in \{1,\ldots,K\}$, such that
		\begin{equation*}
			\begin{array}{ccccccccc}
				M_1 & ~P~  & \ldots & ~P~  & M_\kappa              & ~P~  & \ldots & ~P~  & M_K, \\
				M_1 & ~P'~ & \ldots & ~P'~ & M^1_\kappa ~P'~ \ldots ~P'~ M^L_\kappa & ~P'~ & \ldots & ~P'~ & M_K, 
			\end{array}
		\end{equation*}
		where $M_{\kappa} = \bigcup_{l = 1}^L M^{l}_{\kappa}$ is a partition of $M_{\kappa}$ 
		(i.e., $M^{l}_{\kappa} \cap M^{l'}_{\kappa} = \emptyset$ for $l \neq l'$). 
	\item $(R,R')$ is called an \emph{$L$-separation} if there exist $L_1,\ldots,L_K\geq 1$, such that
		\begin{equation*}
			\begin{array}{ccccccccc}
				M_1 & ~P~  & \ldots & ~P~ & M_K, \\
				M_1^1 ~P'~ \ldots ~P'~ M_1^{L_1} & ~P'~ & \ldots & ~P'~ & M_K^1 ~P'~ \ldots ~P'~ M_K^{L_K}, 
			\end{array}
		\end{equation*}
		where, for each $k\in\{1,\ldots,K\}$, $M_k = \bigcup_{l = 1}^{L_k} M^{l}_{k}$ is a partition of $M_{k}$.  
\end{itemize}
\end{definition}
\begin{definition}[Separation Strategyproofness]
\label{DEF:SEP_L_SEP_MULTI_SEP_SP}
A mechanism $\varphi$ is \emph{separation/$L$-separation/multi-separation strategyproof} if for any separation/$L$-separation/multi-separation $(R,R')$ we have that 
\begin{enumerate}
\setlength{\itemsep}{0pt}
	\item $\varphi(R)$ first order-stochastically dominates $\varphi(R')$ at $R$, 
	\item $\varphi(R')$ first order-stochastically dominates $\varphi(R)$ at $R'$.
\end{enumerate}
\end{definition}
\begin{lemma} 
\label{LEM:AXIOMS_IMPLY_SEP_SP}
If a mechanism $\varphi$ is separation monotonic, separation upper invariant, and separation lower invariant, then it is separation strategyproof.
\end{lemma}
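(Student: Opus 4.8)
The plan is to verify the two first order-stochastic dominance conditions in Definition \ref{DEF:SEP_L_SEP_MULTI_SEP_SP} directly, by comparing the probabilities that $\varphi(R)$ and $\varphi(R')$ assign to all relevant upper contour sets. The guiding observation is that a separation introduces exactly one new ``cut'' in the preference order, splitting $M_\kappa$ into $M^1_\kappa$ and $M^2_\kappa$, so the collection of distinct upper contour sets barely changes in passing from $R$ to $R'$.

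First I would consolidate what the axioms say about the block probabilities. Separation upper invariance gives $\varphi_{M_k}(R) = \varphi_{M_k}(R')$ for $k < \kappa$, and separation lower invariance gives the same for $k > \kappa$. Since probabilities sum to $1$ under both reports, these equalities force $\varphi_{M_\kappa}(R) = \varphi_{M_\kappa}(R')$ as well. Separation responsiveness (implied by separation monotonicity) then yields $\varphi_{M^1_\kappa}(R') \geq \varphi_{M^1_\kappa}(R)$ and $\varphi_{M^2_\kappa}(R') \leq \varphi_{M^2_\kappa}(R)$.

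For dominance at $R$, every upper contour set has the form $U_t = M_1 \cup \ldots \cup M_t$, and $\varphi_{U_t}(R) = \sum_{k=1}^t \varphi_{M_k}(R)$. By the block-wise equalities above, each summand is unchanged between $R$ and $R'$, so $\varphi_{U_t}(R) = \varphi_{U_t}(R')$ for every $t$; dominance at $R$ therefore holds with equality throughout. For dominance at $R'$, the distinct upper contour sets coincide with those at $R$ except that the cut between $M^1_\kappa$ and $M^2_\kappa$ produces one additional set, $V = M_1 \cup \ldots \cup M_{\kappa-1} \cup M^1_\kappa$. For all the coinciding sets we already have equality, hence $\varphi(R') \geq \varphi(R)$ trivially. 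For $V$, upper invariance makes the $M_1,\ldots,M_{\kappa-1}$ portion equal and responsiveness gives $\varphi_{M^1_\kappa}(R') \geq \varphi_{M^1_\kappa}(R)$, so $\varphi_{V}(R') \geq \varphi_{V}(R)$, which completes dominance at $R'$.

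The only conceptual step is the last one: correctly enumerating the upper contour sets at $R'$ and recognizing that $V$ is the sole genuinely new set, controlled by exactly the responsiveness inequality. Everything else is bookkeeping with the invariance equalities. I expect no serious obstacle here; the main thing to be careful about is the accounting that turns the two invariance axioms plus the normalization into $\varphi_{M_\kappa}(R) = \varphi_{M_\kappa}(R')$. Notably, separation directness is never invoked, so responsiveness together with the two invariance axioms already suffices, consistent with Remark \ref{REM:RELAX_MON_TO_RESP}.
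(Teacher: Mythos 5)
Your proof is correct and follows essentially the same route as the paper's: both derive the block-wise equalities $\varphi_{M_k}(R)=\varphi_{M_k}(R')$ from the two invariance axioms (giving dominance at $R$ trivially), and both reduce dominance at $R'$ to the single new upper contour set $M_1\cup\ldots\cup M_{\kappa-1}\cup M^1_\kappa$, which is handled by separation responsiveness. Your version merely makes explicit the enumeration of upper contour sets and the observation that directness is never needed, both of which the paper leaves implicit.
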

\begin{proof} 
Fix a separation $(R,R')$. 
By separation upper and lower invariance, we get that 
$\varphi_{M_k}(R) = \varphi_{M_k}(R')$ for all $k \in \{1,\ldots,K\}$. 
Thus, $\varphi(R)$ first order-stochastically dominates $\varphi(R')$ at $R$ trivially. 
In addition, by separation responsiveness, we have
\begin{equation}
\varphi_{M_{\kappa}^1}(P') - \varphi_{M_{\kappa}^1}(P) + \sum_{k=1}^{\kappa-1} \varphi_{M_k}(P') - \varphi_{M_k}(P) \geq 0. 
\end{equation}
This implies that $\varphi(R')$ first order-stochastically dominates $\varphi(R)$ at $R'$.
\end{proof}
\begin{lemma} 
\label{LEM:AXIOMS_IMPLY_L_SEP_SP}
If a mechanism $\varphi$ is separation monotonic, separation upper invariant, and separation lower invariant, then it is $L$-separation strategyproof for any $L\geq 2$.
\end{lemma}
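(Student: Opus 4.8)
The plan is to reduce every $L$-separation to a chain of ordinary separations and then import the conclusion of Lemma~\ref{LEM:AXIOMS_IMPLY_SEP_SP} piece by piece. Given an $L$-separation $(R,R')$, I first record a \emph{preservation} fact: for every original indifference class $M_k$ of $R$ (including $M_\kappa$) we have $\varphi_{M_k}(R)=\varphi_{M_k}(R')$. To see this, interpolate between $R$ and $R'$ by the orders $R_0=R,\ldots,R_{L-1}=R'$, where $R_i$ has split off the top pieces $M_\kappa^1,\ldots,M_\kappa^i$ and keeps $M_\kappa^{i+1}\cup\ldots\cup M_\kappa^L$ as a single class; each transition $(R_{i-1},R_i)$ is an ordinary separation. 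At every such step the classes ranked strictly above the separated block are fixed by separation upper invariance and those ranked strictly below by separation lower invariance, so each $M_k$ with $k\neq\kappa$ is preserved along the whole chain, and $\varphi_{M_\kappa}$ is then forced to be preserved as the complementary mass. Since the upper contour sets of $R$ are exactly the unions $M_1\cup\ldots\cup M_k$, preservation immediately yields that $\varphi(R)$ and $\varphi(R')$ place equal mass on every upper contour set of $R$; hence the first dominance requirement (at $R$) holds with equality and is free.

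The substantive requirement is that $\varphi(R')$ first order-stochastically dominate $\varphi(R)$ at $R'$. The upper contour sets of $R'$ either do not cut the block $M_\kappa$ (ending above it or containing all of it), or they are the intermediate sets $M_1\cup\ldots\cup M_{\kappa-1}\cup T^l$ with $T^l:=M_\kappa^1\cup\ldots\cup M_\kappa^l$ for $1\le l\le L-1$. By the preservation fact the non-cutting sets carry equal mass under $R$ and $R'$, and the common prefix $M_1\cup\ldots\cup M_{\kappa-1}$ cancels, so the entire dominance claim reduces to the family of prefix inequalities $\varphi_{T^l}(R')\ge\varphi_{T^l}(R)$ for $l=1,\ldots,L-1$.

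To establish each prefix inequality I route through a single auxiliary ordinary separation $\hat R_l$, namely the one that splits $M_\kappa$ into exactly two groups $T^l$ (on top) and $M_\kappa\setminus T^l$ (below). Separation responsiveness applied to $(R,\hat R_l)$ gives directly $\varphi_{T^l}(\hat R_l)\ge\varphi_{T^l}(R)$. It then remains to argue $\varphi_{T^l}(\hat R_l)=\varphi_{T^l}(R')$: the order $R'$ is obtained from $\hat R_l$ by further refining the two classes $T^l$ and $M_\kappa\setminus T^l$ into their constituent pieces, and I will show by the same chain-of-separations argument as above (now with base order $\hat R_l$) that such refinements leave the total mass on $T^l$ unchanged—each ordinary separation carried out inside $T^l$ keeps $\varphi_{T^l}$ fixed because its two parts are governed by upper and lower invariance, and each separation inside $M_\kappa\setminus T^l$ leaves all of $T^l$ strictly above it and hence fixed by upper invariance. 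Chaining the two facts gives $\varphi_{T^l}(R')=\varphi_{T^l}(\hat R_l)\ge\varphi_{T^l}(R)$, which is precisely what dominance at $R'$ requires.

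The main obstacle is that first order-stochastic dominance is stated relative to a fixed preference order, so one cannot simply compose the dominance relations furnished by Lemma~\ref{LEM:AXIOMS_IMPLY_SEP_SP} along the chain. The real work is therefore pushed into the order-free preservation statement, which must be proved with care about the index shifts as pieces are split off and relumped; once preservation is in hand, a single well-chosen two-part separation $\hat R_l$ converts each prefix comparison into one direct application of separation responsiveness.
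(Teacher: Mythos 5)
Your proof is correct and takes essentially the same approach as the paper: for each prefix $T^{l}=M_\kappa^1\cup\ldots\cup M_\kappa^l$ the paper likewise builds a chain from $R$ to $R'$ whose first step is exactly your two-way split $\hat R_l$, applies separation responsiveness to that single step, and uses separation upper and lower invariance to show the mass on $T^{l}$ is preserved along the rest of the chain. Your explicit treatment of the ``preservation'' fact for the dominance at $R$ is merely a more careful spelling-out of what the paper asserts in one line.
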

\begin{proof} 
Consider an $L$-separation $(R,R')$. 
As in Lemma \ref{LEM:AXIOMS_IMPLY_SEP_SP}, first order-stochastic dominance of $\varphi(R)$ over $\varphi(R')$ at $R$ follows from separation upper and lower invariance. 

We now need to verify first order-stochastic dominance of $\varphi(R')$ over $\varphi(R)$ at $R'$ as well. 
For this, we need to show that 
\begin{equation}
	\sum_{l=1}^{\overline{l}} \varphi_{M_{\kappa}^l}(P') - \varphi_{M_{\kappa}^l}(P)  \geq 0
\label{EQ:FOSD_FOR_L_SEPARATION_VERFIY}
\end{equation}
for all $\overline{l} \in \{1,\ldots,L\}$. 
Observe that any $L$-separation can be decomposed into a sequence of $L-1$ separations $((R,R^1),(R^1,R^2),\ldots,(R^{L-2},R'))$, where 
\begin{equation*}
	\begin{array}{lllll}
		\ldots & ~P~  & \left( M^1_\kappa \cup \ldots \cup M^L_\kappa\right) & ~P~  & \ldots, \\
		\ldots & ~P^1~  & \left( M^1_\kappa \right) ~P^1~ \left(M^2_\kappa \cup \ldots \cup M^L_\kappa\right) & ~P^1~  & \ldots, \\		
		\ldots & ~P^2~  & \left( M^1_\kappa \right) ~P^2~ \left(M^2_\kappa \right) ~P^2~ \left(M^3_\kappa \cup \ldots \cup M^L_\kappa\right) & ~P^2~  & \ldots, \\
		\ldots & & & & \\
		\ldots & ~P^{L-2}~  & \left( M^1_\kappa \right) ~P^{L-2}~ \ldots ~P^{L-2}~ \left(M^{L-1}_\kappa \cup M^L_\kappa\right) & ~P^{L-2}~  & \ldots, \\
		\ldots & ~P'~ & \left(M^1_\kappa\right) ~P'~ \ldots ~P'~ \left(M^L_\kappa\right) & ~P'~ & \ldots. 
	\end{array}
\end{equation*}
By separation responsiveness, we have $\varphi_{M_\kappa^1}(R) \leq \varphi_{M_\kappa^1}(R^1)$, and separation upper invariance implies $\varphi_{M_\kappa^1}(R^1) = \varphi_{M_\kappa^1}(R^2) = \ldots = \varphi_{M_\kappa^1}(R')$. 
Thus, 
\begin{equation}
	\varphi_{M_\kappa^1}(R') - \varphi_{M_\kappa^1}(R) \geq 0. 
\end{equation}
Next, we consider a different sequence of separations, where 
\begin{equation*}
	\begin{array}{lllll}
		\ldots & ~P~  & \left( M^1_\kappa \cup \ldots \cup M^L_\kappa\right) & ~P~  & \ldots, \\
		\ldots & ~P^1~  & \left( M^1_\kappa \cup M^2_\kappa\right) ~P^1~ \left(M^3_\kappa \cup \ldots \cup M^L_\kappa\right) & ~P^1~  & \ldots, \\		
		\ldots & ~P^2~  & \left( M^1_\kappa \cup M^2_\kappa\right) ~P^2~ \left(M^3_\kappa \right) ~P^2~ \left(M^3_\kappa \cup \ldots \cup M^L_\kappa\right) & ~P^2~  & \ldots, \\
		\ldots & & & & \\
		\ldots & ~P^{L-3}~  & \left( M^1_\kappa \cup M^3_\kappa\right) ~P^{L-3}~ \ldots ~P^{L-3}~ \left(M^{L-1}_\kappa \cup M^L_\kappa\right) & ~P^{L-3}~  & \ldots, \\
		\ldots & ~P^{L-2}~  & \left( M^1_\kappa \cup M^2_\kappa\right) ~P^{L-2}~ \ldots ~P^{L-2}~ \left(M^L_\kappa\right) & ~P^{L-2}~  & \ldots, \\
		\ldots & ~P'~ & \left(M^1_\kappa\right) ~P'~ \left(M^2_\kappa\right) ~P'~ \ldots ~P'~ \left(M^L_\kappa\right) & ~P'~ & \ldots. 
	\end{array}
\end{equation*}
Again by separation responsiveness, we have $\varphi_{M_\kappa^1 \cup M_\kappa^2}(R) \leq \varphi_{M_\kappa^1 \cup M_\kappa^2}(R^1)$, separation upper and lower invariance imply $\varphi_{M_\kappa^1 \cup M_\kappa^2}(R^1)= \ldots = \varphi_{M_\kappa^1 \cup M_\kappa^2}(R')$. 
Thus, 
\begin{equation}
	\sum_{l=1}^2 \varphi_{M_\kappa^l}(R') - \varphi_{M_\kappa^l}(R) \geq 0. 
\end{equation} 
We can verify (\ref{EQ:FOSD_FOR_L_SEPARATION_VERFIY}) analogously for all $\overline{l} \in \{3,\ldots,L\}$. 
\end{proof}

\begin{lemma} 
\label{LEM:AXIOMS_IMPLY_MULTI_SEP_SP}
If a mechanism $\varphi$ is separation monotonic, separation upper invariant, and separation lower invariant, then it is multi-separation strategyproof.
\end{lemma}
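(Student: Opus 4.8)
The plan is to reduce the multi-separation to a chain of $L$-separations and invoke Lemma \ref{LEM:AXIOMS_IMPLY_L_SEP_SP}. Given a multi-separation $(R,R')$ with partitions $M_k = \bigcup_{l=1}^{L_k} M_k^l$, I would introduce intermediate preference orders $R = R^0, R^1, \ldots, R^K = R'$, where $R^j$ refines the first $j$ classes $M_1,\ldots,M_j$ into their pieces and leaves $M_{j+1},\ldots,M_K$ intact. Each consecutive pair $(R^{j-1},R^j)$ is then an $L_j$-separation that splits the single class $M_j$ (which sits at some position $\kappa_j$ in $R^{j-1}$) into $M_j^1 ~P~ \ldots ~P~ M_j^{L_j}$; steps with $L_j = 1$ are vacuous and can be skipped. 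By Lemma \ref{LEM:AXIOMS_IMPLY_L_SEP_SP}, each such step is $L$-separation strategyproof, and I will chain these relations. The recurring subtlety is that first order-stochastic dominance is reference-order dependent, so I must track carefully which contour set I evaluate and at which order.

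For the first dominance relation --- that $\varphi(R)$ first order-stochastically dominates $\varphi(R')$ at $R$ --- the relevant sets are the unions $U_j = M_1 \cup \ldots \cup M_j$ of complete original classes. I would show $\varphi_{U_j}(R) = \varphi_{U_j}(R')$ by telescoping along the chain: for a step $(R^{j'-1},R^{j'})$ refining $M_{j'}$, separation upper invariance preserves $\varphi_{U_j}$ when $j' > j$ (since $U_j$ lies entirely above $M_{j'}$), while for $j' \le j$ the class $M_{j'}$ lies inside $U_j$ and separation upper and lower invariance preserve the mass above, on, and below $M_{j'}$, hence on all of $U_j$. Summing, $\varphi_{U_j}$ is unchanged from $R$ to $R'$, so this direction holds with equality.

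The harder direction is that $\varphi(R')$ first order-stochastically dominates $\varphi(R)$ at $R'$. Here a generic upper contour set at $R'$ has the form $V = U_{k-1} \cup (M_k^1 \cup \ldots \cup M_k^{\overline{l}})$ for some $k$ and $\overline{l} \in \{1,\ldots,L_k\}$, with $U_{k-1} = M_1 \cup \ldots \cup M_{k-1}$. I again telescope $\varphi_V$ along the chain. For every step $j' \neq k$ the set $V$ either lies entirely above $M_{j'}$ (if $j' > k$) or contains $M_{j'}$ entirely (if $j' < k$, since $M_{j'} \subseteq U_{k-1} \subseteq V$), so separation upper and lower invariance leave $\varphi_V$ unchanged. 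The single active step is $j' = k$: in the $L_k$-separation $(R^{k-1},R^k)$, the set $V$ is precisely the upper contour set of $M_k^{\overline{l}}$ at $R^k$, so $L$-separation strategyproofness of this step (Lemma \ref{LEM:AXIOMS_IMPLY_L_SEP_SP}) yields $\varphi_V(R^k) \ge \varphi_V(R^{k-1})$. Collapsing the telescope gives $\varphi_V(R') - \varphi_V(R) = \varphi_V(R^k) - \varphi_V(R^{k-1}) \ge 0$, as needed.

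The main obstacle I anticipate is the bookkeeping around the reference-order dependence of dominance: I must verify that the single inequality supplied by Lemma \ref{LEM:AXIOMS_IMPLY_L_SEP_SP} is evaluated at the correct intermediate order $R^k$ and that its relevant upper contour set coincides exactly with $V$, while confirming that all remaining steps are genuinely neutralized by upper and lower invariance rather than merely dominated. Once the chain is arranged so that each upper contour set at $R'$ is activated by exactly one step, the argument reduces cleanly to the already-established $L$-separation case.
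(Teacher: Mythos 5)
Your overall strategy---decomposing the multi-separation into a chain of $L$-separations and telescoping each upper contour set of $R'$ along the chain---is the same as the paper's, and your first dominance direction (equality of $\varphi_{U_j}$ along the chain, since every $U_j$ is a union of whole classes at every intermediate order) is fine. The gap is in the second direction, at the stages $j'<k$. At such a stage the class $M_k$ has not yet been refined, so $V\cap M_k = M_k^1\cup\ldots\cup M_k^{\overline{l}}$ is a \emph{proper subset} of a single indifference class of both $R^{j'-1}$ and $R^{j'}$. Separation upper and lower invariance (and responsiveness) only constrain the total probability $\varphi_{M_k}$ assigned to that whole class; they say nothing about how that probability is distributed among the alternatives inside $M_k$. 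Hence $\varphi_{M_k^1\cup\ldots\cup M_k^{\overline{l}}}$ may change at these stages even though $\varphi_{M_k}$ does not, and the identity $\varphi_V(R')-\varphi_V(R)=\varphi_V(R^k)-\varphi_V(R^{k-1})$ is unjustified. Concretely, with $M_1=\{a,b\}$ and $M_2=\{c,d\}$, your chain passes through the order $a~P~b~P~\{c,d\}$, and at that first stage the mechanism may shift probability from $c$ to $d$ while keeping $\varphi_{\{c,d\}}$ fixed; nothing you have invoked then rules out $\varphi_{\{a,b,c\}}(R')<\varphi_{\{a,b,c\}}(R)$ along your particular telescope.

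The repair is local and is exactly the device the paper already uses inside Lemma \ref{LEM:AXIOMS_IMPLY_L_SEP_SP}, where a \emph{different} sequence of separations is chosen for each target inequality. To bound $\varphi_V$ for $V=U_{k-1}\cup(M_k^1\cup\ldots\cup M_k^{\overline{l}})$, refine $M_k$ \emph{first}: Lemma \ref{LEM:AXIOMS_IMPLY_L_SEP_SP} applied to the $L_k$-separation $(R,R'')$, where $R''$ refines only $M_k$, gives $\varphi_V(R'')\geq\varphi_V(R)$; in every subsequent stage of the chain from $R''$ to $R'$ the pieces $M_k^1,\ldots,M_k^{L_k}$ are whole classes lying entirely above or entirely below the class being refined, so upper and lower invariance now genuinely preserve each $\varphi_{M_k^l}$ (and each $\varphi_{M_j}$ with $j<k$), giving $\varphi_V(R')=\varphi_V(R'')$. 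Since $\varphi(R)$ and $\varphi(R')$ do not depend on which chain you traverse, running one such chain per $k$ establishes all the required inequalities. With that reordering your argument closes; as written, the claim that the stages $j'<k$ are ``neutralized by upper and lower invariance'' is precisely the step that fails.
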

\begin{proof} 
As in the proof of Lemma \ref{LEM:AXIOMS_IMPLY_L_SEP_SP}, first order-stochastic dominance of $\varphi(R)$ over $\varphi(R')$ at $R$ follows from separation upper and lower invariance. 
Observe that any multi-separation can be decomposed into a sequence of $L$-separations. 
Thus, arguments similar to those in Lemma \ref{LEM:AXIOMS_IMPLY_L_SEP_SP} yield first order-stochastic dominance of $\varphi(R')$ over $\varphi(R)$ at $R'$. 
%
%
\end{proof}
\begin{definition}
\label{DEF:UTILITY_FUNCTION}
A mapping $u:M \rightarrow \mathds{R}^+$ is called a \emph{utility function}. 
$u$ is said to be \emph{consistent with} a preference order $R$ if $u(a) \geq u(b)$ whenever $a~R~b$, denoted $u \sim R$. 
\end{definition}
\begin{lemma} 
\label{LEM:EXISTS_MULTI_SEP_TRANSITION}
For any preference orders $R,R'$ and consistent utility functions $u \sim R$ and $u'\sim R'$, the line segment described by 
\begin{equation}
	\{ u_{\alpha} = (1-\alpha) u + \alpha u', \alpha \in [0,1]\}
\end{equation}
passes through a sequence of preference orders $R=R^0,R^1,\ldots,R^{S-1},R^S = R'$, such that for all $s\in \{0,\ldots,S-1\}$ either the pair $(R^s,R^{s+1})$ or the pair $(R^{s+1},R^{s})$ is a multi-separation. 
\end{lemma}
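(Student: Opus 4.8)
The plan is to track the weak order induced by the interpolated utility $u_\alpha$ as $\alpha$ sweeps from $0$ to $1$, and to show that it changes only finitely often and that every change is a multi-separation in one of the two directions. For each $\alpha \in [0,1]$, let $R_\alpha$ be the weak order defined by $a ~R_\alpha~ b \iff u_\alpha(a) \geq u_\alpha(b)$; this is complete and transitive because it is induced by the real numbers $(u_\alpha(a))_{a \in M}$. For any pair $a,b$, the difference
\[
u_\alpha(a) - u_\alpha(b) = (1-\alpha)\big(u(a)-u(b)\big) + \alpha\big(u'(a)-u'(b)\big)
\]
is affine in $\alpha$, so it is either identically zero (in which case $a$ and $b$ are indifferent for every $\alpha$) or vanishes at a single $\alpha$. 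Collecting the finitely many pairs of the second kind yields a finite set of \emph{critical values} $0 \leq \gamma_1 < \ldots < \gamma_Q \leq 1$ at which some non-identical pair becomes tied.

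Next I would argue that $R_\alpha$ is piecewise constant. On any open interval between consecutive critical values no non-identical pair is ever tied, so the strict comparisons among the $u_\alpha$-values, together with the indifferences coming from identically-zero differences, are all preserved throughout; hence $R_\alpha$ is constant there. At a critical value $\gamma_q$ some additional pairs become tied, so $R_{\gamma_q}$ is a \emph{coarsening} of the order on each adjacent interval (the same indifferences plus the new ties). Listing, in order of increasing $\alpha$, one representative order for each open interval together with the order at each critical value, I obtain a finite sequence of induced orders $\tilde R^0 = R_0, \tilde R^1, \ldots, \tilde R^{S'} = R_1$ in which interval-orders and critical-orders alternate.

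The core step is to check that each consecutive pair in this sequence is a multi-separation in the sense of Definition \ref{DEF:L_AND_MULTI_SEPARATION}, with the coarser (critical-value) order playing the role of $R$ and the finer (interval) order that of $R'$. Fix an interval order $R^{\mathrm{int}}$ and an adjacent critical order $R^{\mathrm{crit}}$. If $a,b$ lie in distinct indifference classes of $R^{\mathrm{crit}}$, then $u_{\gamma_q}(a) \neq u_{\gamma_q}(b)$, and since their difference does not vanish on the adjacent interval, the strict comparison between $a$ and $b$ is the same under $R^{\mathrm{int}}$; thus the between-class order is preserved. If instead $a,b$ lie in the same indifference class of $R^{\mathrm{crit}}$, then $R^{\mathrm{int}}$ restricted to that class is itself a weak order, which decomposes uniquely into indifference sub-classes totally ordered by strict preference — exactly the chain $M_k^1 ~P'~ \ldots ~P'~ M_k^{L_k}$ required by the definition. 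Hence $(R^{\mathrm{crit}}, R^{\mathrm{int}})$ is a multi-separation, so one of $(\tilde R^s, \tilde R^{s+1})$ or $(\tilde R^{s+1}, \tilde R^s)$ is a multi-separation for every $s$.

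It remains to attach the prescribed endpoints, and this is where the main obstacle lies. Because $u \sim R$ forces $u(a)=u(b)$ only when $a ~I~ b$ but leaves room for $u(a)=u(b)$ even when $a~P~b$, the induced endpoint order $R_0 = \tilde R^0$ may be strictly coarser than $R$. However, $R$ is then a consistent refinement of $R_0$: completeness of $R$ forces the between-class orders to agree (if $u(a) > u(b)$ then $b~R~a$ is impossible by consistency, so $a~P~b$), and within each $R_0$-class $R$ is a chain of indifference classes. Thus $(R_0, R)$ is a multi-separation and I can prepend $R$ as a new initial element; the same argument at $\alpha = 1$ lets me append $R'$ via the multi-separation $(R_1, R')$, yielding the desired sequence $R = R^0, R^1, \ldots, R^S = R'$. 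The delicate part throughout is verifying that every transition respects the rigid structure of a multi-separation — that coarsenings never permute the relative order of alternatives lying in different classes and merge or split only along chains — together with this endpoint care, where consistency of $u$ with $R$ is merely a weak rather than a strict representation.
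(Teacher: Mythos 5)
Your proof is correct and rests on the same geometric idea as the paper's: pairwise utility differences are affine in $\alpha$, so along the segment the induced order can only change by passing through a coarser ``tied'' order at finitely many critical points, and such coarsening/refinement steps are exactly multi-separations. The paper argues by contradiction --- if neither direction of a consecutive pair were a multi-separation, averaging the two utility vectors would produce an intermediate order on the segment --- whereas you construct the sequence directly from the critical values and verify piecewise constancy and the refinement structure explicitly; the two arguments are contrapositives of one another. Your version is, however, more complete in two respects that the paper's proof passes over silently: (i) you actually establish that the induced order is piecewise constant with finitely many transitions, while the paper simply speaks of ``the sequence of preference orders through which the line passes''; and (ii) you notice that, because Definition~\ref{DEF:UTILITY_FUNCTION} only requires $u(a)\geq u(b)$ when $a\,R\,b$, the order induced by $u_0=u$ may be strictly coarser than $R$, so the segment need not literally pass through $R$ at all. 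Your fix --- prepending $R$ via the multi-separation $(R_0,R)$ and appending $R'$ via $(R_1,R')$ --- is exactly what is needed for the sequence to have endpoints $R$ and $R'$, which matters for the telescoping argument in Lemma~\ref{LEM:MULTI_SEP_SP_IMPLIES_SP}. I see no gaps.
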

\begin{proof} 
For each $R^s$ through which the line $\{u_{\alpha}: \alpha \in [0,1]\}$ passes, let $\alpha_s \in [0,1]$ be such that $u_{\alpha_s}\sim R^s$. 
Assume towards contradiction that, for some $s$, neither the pair $(R^s, R^{s+1})$ nor the pair $(R^{s+1}, R^{s})$ is a multi-separation.  
Then there must exist objects $a,b,c,d$ (not necessarily different) such that 
$a ~P^s~ b$ and $c ~I^s~ d$ 
but 
$a ~I^{s+1}~ b$ and $c ~P^{s+1}~ d$. 
This means that
\begin{center}
$\begin{array}{cc}
	u_{\alpha_s} (a) > 	u_{\alpha_s} (b), & u_{\alpha_s} (c) = u_{\alpha_s} (d), \\
	u_{\alpha_{s+1}} (a) = 	u_{\alpha_{s+1}} (b), & u_{\alpha_{s+1}} (c) > u_{\alpha_{s+1}} (d). 
\end{array}$
\end{center}
Taking the ``average'' of $u_{\alpha_s}$ and $u_{\alpha_{s+1}}$, we get a new utility function
\begin{equation}
	\tilde{u} = u_{\frac{1}{2}\left(\alpha_s + \alpha_{s+1}\right)} = \frac{1}{2}\left( u_{\alpha_s} + u_{\alpha_{s+1}} \right),
\end{equation}
which lies on the line between $u$ and $u'$, but where 
\begin{eqnarray}
	& \tilde{u} (a) > \tilde{u} (b) \text{ and } \tilde{u} (c) > \tilde{u} (d). & 
\end{eqnarray}
Thus, the line passes through a different type $\tilde{R}$ with $\tilde{u} \sim \tilde{R}$ between $R^s$ and $R^{s+1}$, which contradicts the assumption that the line passes directly from $R^s$ to $R^{s+1}$.
\end{proof}
\begin{lemma} 
\label{LEM:MULTI_SEP_SP_IMPLIES_SP}
If a mechanism $\varphi$ is multi-separation strategyproof, then it is strategyproof. 
\end{lemma}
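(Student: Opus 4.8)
The plan is to prove the contrapositive direction packaged as a direct argument: assume $\varphi$ is multi-separation strategyproof, fix an arbitrary pair of preference orders $(R,R')$, and show that $\varphi(R)$ first order-stochastically dominates $\varphi(R')$ at $R$. Since strategyproofness is exactly this first-order stochastic dominance condition for every ordered pair, establishing it for an arbitrary pair suffices. The key device is Lemma~\ref{LEM:EXISTS_MULTI_SEP_TRANSITION}, which lets me connect $R$ and $R'$ by a chain of intermediate preference orders, each consecutive pair being (up to orientation) a multi-separation.

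First I would reformulate first-order stochastic dominance in utility terms. For any utility function $u \sim R$, dominance of $\varphi(R)$ over $\varphi(R')$ at $R$ is equivalent to the expected-utility inequality $\sum_{a\in M} u(a)\,\varphi_{\{a\}}(R) \geq \sum_{a\in M} u(a)\,\varphi_{\{a\}}(R')$ holding for \emph{every} $u\sim R$; this is the standard equivalence between stochastic dominance and expected utility over all consistent (monotone) utility functions. So it suffices to fix an arbitrary $u \sim R$ and prove the expected-utility inequality. The idea is then to pick a convenient consistent utility function $u'\sim R'$, invoke Lemma~\ref{LEM:EXISTS_MULTI_SEP_TRANSITION} on the segment $u_\alpha = (1-\alpha)u + \alpha u'$ to obtain the chain $R = R^0, R^1, \ldots, R^S = R'$, and analyze expected utility along this chain.

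The central step is a telescoping argument. I define $F(\alpha) = \sum_{a\in M} u_\alpha(a)\,\varphi_{\{a\}}(R^{s})$ style quantities and track how expected utility changes as we move along the segment; more cleanly, for each consecutive pair I show that the expected-utility difference $\sum_a u_{\alpha_s}(a)\big(\varphi_{\{a\}}(R^{s}) - \varphi_{\{a\}}(R^{s+1})\big)$ is nonpositive using multi-separation strategyproofness at the evaluation point $u_{\alpha_s}$, which is consistent with whichever of $R^s, R^{s+1}$ is the ``coarser'' order of the multi-separation. The point is that multi-separation strategyproofness gives first-order stochastic dominance in both directions for each consecutive pair, so I can evaluate expected utilities at utility functions consistent with the relevant endpoint and chain the inequalities across all $S$ steps. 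Because the segment is linear and the $u_\alpha$ agree in the ordering sense with the endpoints of each sub-separation, summing these per-step inequalities yields the global inequality between $R^0 = R$ and $R^S = R'$.

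The main obstacle will be handling the orientation bookkeeping: along the chain some pairs are multi-separations as $(R^s, R^{s+1})$ and others only as $(R^{s+1}, R^s)$, so I must apply the correct direction of the dominance guarantee (item~1 versus item~2 of Definition~\ref{DEF:SEP_L_SEP_MULTI_SEP_SP}) at each step, and I must verify that the fixed $u$ from $R$ remains a valid evaluation functional for the telescoped sum rather than needing a different utility function at each node. The cleanest way to manage this is to evaluate every step's inequality against the \emph{original} $u \sim R$ by exploiting that $u$ is consistent with every order finer than $R$ along the segment toward $R'$ in the upper portion, or alternatively to prove monotonicity of $\alpha \mapsto \sum_a u(a)\varphi_{\{a\}}(R^{s(\alpha)})$ directly; establishing that this monotonicity is consistent across the orientation switches is the delicate part, and I would resolve it by noting that multi-separation strategyproofness applied at the coarser endpoint of each consecutive pair controls exactly the sign needed for the step, independent of which way the chain is traversed.
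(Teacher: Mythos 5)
Your overall architecture matches the paper's (and Carroll's 2012) argument: rewrite stochastic dominance as $\langle u,\varphi(R)-\varphi(R')\rangle\geq 0$ for all $u\sim R$, connect $R$ to $R'$ via the segment $u_\alpha=(1-\alpha)u+\alpha u'$ and the chain $R=R^0,\ldots,R^S=R'$ from Lemma~\ref{LEM:EXISTS_MULTI_SEP_TRANSITION}, and telescope. However, there is a genuine gap at the one step that carries all the weight: you never explain how to convert the per-step dominance inequalities, each of which is valid only at its \emph{own} local utility function $u_{\alpha_s}$, into per-step inequalities all evaluated at the single fixed $u\sim R$. Summing inequalities of the form $\langle u_{\alpha_s},\varphi(R^s)-\varphi(R^{s+1})\rangle\geq 0$ does not telescope, because the evaluation functional changes with $s$. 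Both of your proposed fixes fail. First, $u$ is \emph{not} consistent with the intermediate orders along the segment: whenever $u(a)>u(b)$ but $u'(a)<u'(b)$, the ranking of $a$ and $b$ reverses at some interior $\alpha$, so the later $R^s$ are not refinements of $R$ and $u\not\sim R^s$. Second, applying multi-separation strategyproofness only ``at the coarser endpoint'' of each pair yields an inequality at $u_{\alpha_s}$ or $u_{\alpha_{s+1}}$, not at $u$, so it does not ``control exactly the sign needed.''

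The missing idea, which the paper supplies, is the linear identity
\begin{equation*}
\alpha_{s+1}\,u_{\alpha_s}-\alpha_s\,u_{\alpha_{s+1}}=(\alpha_{s+1}-\alpha_s)\,u .
\end{equation*}
Because Definition~\ref{DEF:SEP_L_SEP_MULTI_SEP_SP} gives dominance in \emph{both} directions for each consecutive pair (regardless of which orientation is the multi-separation), one has both $\langle u_{\alpha_s},\varphi(R^s)-\varphi(R^{s+1})\rangle\geq 0$ and $\langle u_{\alpha_{s+1}},\varphi(R^{s+1})-\varphi(R^s)\rangle\geq 0$; taking the combination with the nonnegative weights $\alpha_{s+1}$ and $\alpha_s$ prescribed by the identity and dividing by $\alpha_{s+1}-\alpha_s>0$ yields $\langle u,\varphi(R^s)-\varphi(R^{s+1})\rangle\geq 0$ for every $s$, after which the telescoping sum goes through. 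Note that both local inequalities are genuinely needed here, so your suggestion to use only one endpoint per step cannot be repaired. (Also, a small slip: the per-step expected-utility difference $\langle u_{\alpha_s},\varphi(R^s)-\varphi(R^{s+1})\rangle$ should be non\emph{negative}, not nonpositive.)
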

\begin{proof} 
The arguments in this proof are similar to the proof of the local sufficiency result in \citep{Carroll2012WhenAreLocalIncentiveConstraintsSufficient}. 

We use the fact that for any two lotteries $x,y \in \Delta(M)$ and any preference order $R$, $x$ weakly first order-stochastically dominates $y$ at $R$ if and only if, for any utility function $u \sim R$, we have 
\begin{equation}
	\left\langle u , x-y\right\rangle = \sum_{j\in M} u(j) \cdot \left(x_j - y_j \right) \geq 0.  
\end{equation}
For two preference orders $R,R'$ and consistent utility functions $u\sim R$ and $u' \sim R'$, 
let $\{u_\alpha : \alpha \in [0,1]\}$ be the line segment in the space of utility functions that connects $u$ and $u'$. 
Following Lemma \ref{LEM:EXISTS_MULTI_SEP_TRANSITION}, 
let $\alpha_s \in [0,1]$ be such that $u_{\alpha_s} \sim R^s$ (i.e., $u_{\alpha_s}$ is a utility function on the line segment and consistent with $R^s$), where $R^s$ is the respective element of the sequence of preference orders through which the line segment passes. 
By construction,
\begin{equation}
\alpha_{s+1} u_{\alpha_{s}} - \alpha_{s} u_{\alpha_{s+1}} = (\alpha_{s+1} - \alpha_s) \cdot u,
\end{equation}
and from multi-separation strategyproofness of $\varphi$, we get
\begin{equation}
\left\langle u_{\alpha_s} , \varphi(R^s) - \varphi(R^{s+1}) \right\rangle \geq 0\text{ and }\left\langle u_{\alpha_{s+1}} , \varphi(R^{s+1}) - \varphi(R^{s}) \right\rangle \geq 0.
\end{equation}
Therefore, for all $s \in \{0,\ldots,S-1\}$,
\begin{equation}
\left\langle u , \varphi(R^s) - \varphi(R^{s+1}) \right\rangle \geq 0.
\end{equation}
Summing over all $s$ yields
\begin{equation}
\left\langle u , \varphi(R) - \varphi(R') \right\rangle 
	= \left\langle u , \varphi(R^0) - \varphi(R^S) \right\rangle \geq 0,
\end{equation}
which concludes the proof. 
\end{proof}
Lemma \ref{LEM:MULTI_SEP_SP_IMPLIES_SP} concludes the proof of sufficiency of the axioms for strategyproofness. 
%
%


\end{document}